 \let\backslash=\setminus \let\ge=\geqslant \let\le=\leqslant 
  \def\d{\delta} \def\e{\varepsilon}    \def\s{\sigma} \def\t{\tau} 
\def\S{\Sigma}
     \def\N{\mathcal N}
\def\Re{\mathbb{R}}
\def\k{\kappa}
\def\rho{\varrho}
\def\eL{\mathcal L}
\begin{document}\openup 1\jot
\title[Perfect Equilibria]{A Finite Characterization of Perfect Equilibria \\}
\author[I. Callejas]{Ivonne Callejas}
\address{Mathematisches Institut, Georg August Universit\"{a}t G\"{o}ttingen, Bunsenstr.3-5, 37073 \newline
\indent  G\"{o}ttingen, Germany.}
\email{icallej@mathematik.uni-goettingen.de}
\author[S. Govindan]{Srihari Govindan}
 \address{Department of Economics, University of Rochester, NY 14627, USA.}
 \email{s.govindan@rochester.edu}
 \author[L. Pahl]{Lucas Pahl}\thanks{We are grateful to Saugata Basu, Goulwen Fichou, and Heng Liu for bringing the work of Kurdyka and Spodzieja \cite{KS2014} as well as Basu and Roy \cite{BR2018} to our attention; we would also like to thank Rida Laraki, Sven Rady, the participants of the BSGE Micro Theory Workshop and the One World Game Theory Seminar for their comments and suggestions.}
 \address{Institute for Microeconomics, University of Bonn, Adenauerallee 24-42, 53113 Bonn, Germany.}
 \email{pahl.lucas@gmail.com}
 \date{August, 2020; this version: August, 2021}
\maketitle

\begin{abstract}
Govindan and Klumpp \cite{GK2002} provided a characterization of perfect equilibria using Lexicographic Probability Systems (LPSs). Their characterization was essentially finite in that they showed that there exists a finite bound on the number of levels in the LPS, but they did not compute it explicitly.  In this note, we draw on two recent developments in Real Algebraic Geometry to obtain a formula for this bound. \end{abstract}

\section{Introduction}

The concept of perfect equilibrium \cite{S1975} plays a central role in the theory of refinements of Nash equilibria. Not only has it been successful in applications to economic models, but its pioneering use of trembles has spurred further refinements.  From a practical  viewpoint, the very nature of its definition makes perfect equilibrium very hard to compute, which explains the relative paucity of algorithms to compute it.\footnote{A recent advance in computing perfect equilibria is the homotopy method in \cite{CD2019}.}   Indeed,  recall that an equilibrium of a finite game in normal form is perfect if there exists a sequence of profiles of completely-mixed strategies converging to it against which the equilibrium is a best reply.  The set of perfect equilibria of a finite game is, thus, defined by finitely many polynomial inequalities as well the universal ($\forall$) and existential ($\exists$) quantifiers. Consequently, checking whether a strategy profile is perfect, let alone computing the entire set, requires, in principle, an infinite number of operations, as we have to contend with the quantifiers: for every $\e > 0$, there exists $\d > 0$ such that...  However, the set of perfect equilibria is a semi-algebraic set---see \cite{BZ1994}.   Therefore, by the Tarski-Seidenberg Theorem \cite{BCR1998}, there exists an equivalent description of perfect equilibria that is quantifier-free.  It would be really convenient, then,  to know what such a description would look like. This question is open and seems hard to resolve.  A more modest question is whether we can eliminate the universal quantifier;  in other words, we are after an equivalent definition  of perfection of the following form: a strategy profile is perfect if there exists a solution to a finite system of polynomial equations and inequalities (in which the given strategy profile is a parameter).  Such a formulation provides a finite characterization of perfection, since it requires only a finite number of steps to check whether a finite system of polynomial equations and inequalities has a solution.  Govindan and Klumpp \cite{GK2002} (henceforth GK) obtain a result of this kind.  The system that they consider is shown to have a bound on the number of polynomials involved; this bound depends \textit{only} on the cardinalities of the player set and the strategy sets of the players, but it is not computed.  In this note, we sharpen the result in GK by giving explicit bounds. It is worth noting that the polynomial system in this characterization involves only multilinear polynomials, which are especially suitable for the application of polynomial-solving algorithms (see \cite{DA2010} and \cite{MM1996}).

%A possible approach to computing perfect equilibria 

%An equilibrium of a finite game in normal form is perfect \cite{S1975} if there exists a sequence of profiles of completely-mixed strategies converging to it against which the equilibrium is a best reply. Perfect equilibrium is a strict refinement of Nash equilibrium and a central idea in the literature of refinements, since a perfect equilibrium implies that the strategy used by each player is \textit{admissible}, i.e., not weakly dominated by any other strategy.  The existing common approach to compute perfect equilibria can be unfortunately very time consuming, since it involves computing Nash equilibria for a sequence of perturbed games and then taking a limit to obtain a perfect equilibrium. Therefore new characterizations of perfect equilibrium might open new and more efficient ways to compute such refinement.\footnote{A recent advance in computation of perfect equilibrium is the homotopy method in \cite{CD2019}.} 

What make this paper possible are recent advances in quantifying the bounds involved in two results from Real Algebraic Geometry that were used by GK.  The first concerns what is called the \L{}ojasiewicz inequality, which states that the value of a polynomial in a neighborhood of a zero is bounded from below by a polynomial function of the distance from the zero-set of the polynomial. Kurdyka and Spodzieja \cite{KS2014} give an explicit formula for  the degree of the bounding polynomial. The second concerns the Nash curve-selection lemma \cite{BCR1998}. Suppose $X$ is a semi-algebraic set and $x$ belongs to its closure.  Then, the curve-selection lemma states that there is an analytic function from an interval $[0, \e]$ into the closure of $X$ that maps $0$ to $x$ and maps all other points into $X$.  Basu and Roy \cite{BR2018} provide a quantitative version of this lemma that gives us a bound on the degree of the (coordinate) analytic functions.

\section{Definitions and Statement of The Theorem}\label{sec definitions}

We study a finite game $G$ in normal form.  The set of players is $\N = \{\, 1, \ldots, N \, \}$. The finite set of pure strategies of each player $n \in \N$ is denoted $S_n$, and the corresponding set of  mixed strategies is $\S_n$.  Define $S \equiv \prod_n S_n$ and $\S \equiv \prod_n \S_n$. For each $n$, $S_{-n} = \prod_{m \neq n} S_m$;  $\S_{-n} \equiv \prod_{m \neq n} \S_m$. The payoff function of player $n$ is $G_n: \S \to \Re$.

\begin{definition}\label{def perfection}
The profile $\s \in \S$ is a perfect equilibrium if there exists a  sequence $\s^k$  of profiles of completely-mixed strategies converging to $\s$ such that for each $k$, each player $n \in \N$, and each strategy $s_n \in S_n$, $G_n(\s_n, \s_{-n}^k) \ge G_n(s_n,\s_{-n}^k)$. 
\end{definition}

GK provide an equivalent definition of perfect equilibria that replaces the test sequence $\s^k$ with a lexicographic probability system (LPS) for each player.\footnote{Blume et al \cite{BBD1991} provide a characterization in terms of LPSs defined on the set $S$ of pure-strategy profiles.  But, their definition is not semialgebraic: in particular, one of their conditions---called strong independence---cannot be verified in finitely many steps.}  To describe their characterization, we need some definitions.

\begin{definition}\label{def LPS}
Let $K$ be a non-negative integer. An LPS of order $K$ over a finite set $X$ is a $(K+1)$-tuple $ (\rho^0, \ldots, \rho^K)$ of probability distributions over $X$.  We say that $\rho$ has full support if $\cup_{k = 0}^{K} \, \text{supp}\rho^k = X$.
\end{definition}

\begin{definition}\label{def LPS profile}
An LPS profile of order $K$ over $S$ is an $N$-tuple  $\rho \equiv (\rho_1, \ldots, \rho_N)$, where for each player $n$, $\rho_n$ is an LPS over $S_n$ of order $K$. The LPS profile $\rho$ has full support  if each $\rho_n$ has full support, and in this case, let $\ell(\rho) \equiv \max_{n \in \N} \min \{ \, k \mid \cup_{i = 0}^k \, \text{supp} \rho_n^i = S_n \, \}$.
\end{definition}

The next definition gives us a procedure for forming products of the LPSs of the players.

\begin{definition}\label{def LPS beliefs}
Given an LPS profile $\rho$ of order $K$ over $S$, for each player $n$, the induced beliefs $\mu_n$ over $S_{-n}$ of order $K(N-1)$ is defined as follows. For $k = 0, \ldots, K(N-1)$:
\[
\mu_n^k = C_n^k\sum_{{(k_m)}_{m \neq n}}  \otimes_{m \neq n} \rho_{m}^{k_m},
\]
where the sum is over all vectors ${(k_m)}_{m \neq n}$ whose coordinates sum to $k$, and $C_n^k$ is the appropriate normalizing constant that gives us a probability distribution.
\end{definition}

\begin{definition}\label{def LPS best-reply}
Let $\rho$ be an LPS profile of order $K$ and let $\mu_n$ be the induced beliefs for player $n$. For $0 \le k \le K(N-1)$, we say that a strategy $\t_n \in \S_n$ is a best reply of order $k$ against $\rho$ if for all $s_n \in S_n$:
\[
(G_n(\t_n, \mu_n^0), \ldots, G_n(\t_n, \mu_n^k)) \ge_L (G_n(s_n, \mu_n^0), \ldots, G_n(s_n, \mu_n^k)),
\]
where $\ge_L$ is the lexicographic ordering on vectors.
\end{definition}

We are now ready to state the main result of GK.

\begin{theorem}\label{thm GK}
Given a normal-form game $G$, there exist non-negative integers $\ell \le K$ (that depend only on $N$ and the cardinalities of the sets $S_n$) such that a strategy profile $\s$ is a perfect equilibrium of $G$ iff  there exists an LPS profile $\rho$ of order $K$ such that:
\begin{enumerate}
\item $\rho$ has full support and $\ell(\rho) \le \ell$;
\item $\rho_n^0 = \s_n$ for each player $n$;
\item $\s_n$ is a best-reply of order $K$ against $\rho$.
\end{enumerate}
\end{theorem}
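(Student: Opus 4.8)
The plan is to prove the two directions separately, by means of the dictionary between LPS profiles and germs at $0$ of (Nash) curves of completely-mixed profiles. Given an LPS profile $\rho$ of order $K$, set for each player $n$
$\bar\rho_n(t)\equiv\frac{\sum_{k=0}^{K}t^{k}\rho_n^{k}}{\sum_{k=0}^{K}t^{k}}\in\S_n$,
which is well defined and completely mixed for small $t>0$ exactly when $\rho_n$ has full support, and which satisfies $\bar\rho_n(0)=\rho_n^{0}$. Since each $G_n$ is multilinear, $t\mapsto G_n(\t_n,\otimes_{m\neq n}\bar\rho_m(t))$ is analytic, and a direct computation identifies its Taylor coefficients---up to a triangular rescaling with positive diagonal, coming from expanding $(\sum_{k=0}^{K}t^{k})^{-(N-1)}$ and from the normalizers $C_n^{j}$---with the payoffs $G_n(\t_n,\mu_n^{j})$ against the induced beliefs of Definition~\ref{def LPS beliefs}. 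Consequently, for each $s_n$ the analytic function $h_{n,s_n}(t)\equiv G_n(\s_n,\otimes_{m\neq n}\bar\rho_m(t))-G_n(s_n,\otimes_{m\neq n}\bar\rho_m(t))$ is $\ge 0$ for all small $t>0$ if and only if it vanishes identically or its first non-zero Taylor coefficient is positive, and this is precisely the lexicographic best-reply inequality of Definition~\ref{def LPS best-reply}. So property~(3) is equivalent to: for every $n$, $\s_n$ is a best reply against $\otimes_{m\neq n}\bar\rho_m(t)$ for all small $t>0$. The ``if'' direction is then immediate: given $\rho$ satisfying (1)--(3), the profiles $\s^{j}\equiv(\bar\rho_n(1/j))_{n}$ are completely mixed, converge to $(\rho_n^{0})_{n}=\s$, and admit $\s$ as a best reply, so $\s$ is perfect.

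For the ``only if'' direction, fix a perfect equilibrium $\s$ and let $X(\s)$ be the semi-algebraic set of completely-mixed profiles $\t$ with $G_n(\s_n,\t_{-n})\ge G_n(s_n,\t_{-n})$ for all $n$ and all $s_n\in S_n$; Definition~\ref{def perfection} says $\s\in\overline{X(\s)}$. Apply the Nash curve-selection lemma to obtain a Nash curve $\g\colon[0,\e]\to\overline{X(\s)}$ with $\g(0)=\s$ and $\g(t)\in X(\s)$ for $t>0$. Grouping the leading coefficients of the coordinate functions $\g_n(\cdot)(s_n)$ according to their order of vanishing (after a reparametrization $t\mapsto t^{D}$ that separates these orders) yields an LPS profile $\rho$ with $\rho_n^{0}=\g_n(0)=\s_n$, which is~(2); since $\g_n(t)(s_n)>0$ for $t>0$, no coordinate vanishes identically, so every pure strategy eventually enters some support, and the level at which this happens---which becomes $\ell(\rho)$---is bounded by the orders of vanishing in play, which is~(1). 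Truncating $\rho$ at a level $K$ at least as large as that bound and as large as the orders of vanishing of the non-identically-zero $h_{n,s_n}$, the equivalence of the previous paragraph together with $h_{n,s_n}(t)\ge 0$ for $t>0$ delivers~(3).

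The content of the theorem is that the integers $\ell$ and $K$ just produced can be bounded a priori by a function of $N$ and the $|S_n|$ alone, and the argument above isolates exactly the two points where this is needed. First, the orders of vanishing at $0$ of the one-variable Nash functions $\g_n(\cdot)(s_n)$ and $h_{n,s_n}$---the depth to which the expansions must be carried---must be bounded effectively; this is the \L{}ojasiewicz inequality with the explicit exponent of Kurdyka and Spodzieja \cite{KS2014}, applied to the relevant polynomials, which are multilinear (degree at most $N-1$ in the strategy variables, degree $1$ for the simplex constraints) on a space of dimension $\sum_n|S_n|$, so their degrees depend only on $N$ and the $|S_n|$. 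Second, the curve delivered by curve selection must itself have coordinate functions of effectively bounded degree, which is the quantitative curve-selection lemma of Basu and Roy \cite{BR2018}. Tracking these degrees through the multilinear payoffs---where composing with the $N-1$ opponents' curves can multiply the degree by $N-1$, which is why the induced beliefs carry order $K(N-1)$---and through the reparametrization yields closed-form values for $\ell$ and $K$. The main obstacle, then, is not the structure of the proof, which is GK's, but obtaining these two effective bounds and propagating them without loss---precisely what the two cited recent results make possible.
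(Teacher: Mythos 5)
Your ``only if'' direction is essentially the paper's (quantitative curve selection, then convert the truncated curve into an LPS), but your ``if'' direction has a genuine gap, and it sits exactly at the nontrivial point of the theorem. Condition (3) of Theorem \ref{thm GK} says $\sigma_n$ is a best reply of \emph{order $K$} against $\rho$, i.e.\ the lexicographic inequality of Definition \ref{def LPS best-reply} is imposed only on levels $0,\ldots,K$ of the induced beliefs. Along your curve $\bar\rho(t)$, however, the payoff difference $h_{n,s_n}(t)$ is (after clearing the positive denominator) a polynomial of degree up to $K(N-1)$, and its coefficients correspond, via your triangular rescaling, to \emph{all} levels $0,\ldots,K(N-1)$. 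So ``$h_{n,s_n}\equiv 0$ or its first non-zero coefficient is positive'' is equivalent to being a best reply of order $K(N-1)$, not of order $K$. For $N\ge 3$ an LPS satisfying (1)--(3) may have all differences vanish through level $K$ while a coefficient at some level in $\{K+1,\ldots,K(N-1)\}$ is negative; then $\sigma$ is \emph{not} a best reply against $\bar\rho(t)$ for small $t>0$, and your sequence $\sigma^j=\bar\rho(1/j)$ proves nothing. Showing that the finite, order-$K$ condition nevertheless forces perfection once $K$ is large enough relative to $\ell(\rho)$ is the hard half of the theorem; the paper handles it (through Lemma \ref{lem polynomial}) by introducing the slack-variable polynomial $F$ and the exceptional set $W$ and invoking the \L{}ojasiewicz inequality with the Kurdyka--Spodzieja exponent: if $V(F)\setminus W$ missed a neighborhood of $\sigma$, the order of $F(\eta(t),c(t))$ would be at most $o(\eta)\cdot\mathcal{L}\le K$, contradicting the order-$K$ best-reply hypothesis, and the nearby points of $V(F)\setminus W$ then furnish the test sequence for perfection.

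Relatedly, your closing paragraph misplaces where the two quantitative inputs act: the Basu--Roy quantitative curve-selection lemma is what bounds $o(\varphi)$, hence $\ell$, in the ``only if'' direction, while the Kurdyka--Spodzieja \L{}ojasiewicz exponent is needed precisely in the ``if'' direction you declared immediate (it is what dictates $K=\ell\,\mathcal{L}$); it is not used to bound orders of vanishing of $h_{n,s_n}$ along the selected curve, nor is such a bound needed there, since truncating the curve at order $K$ leaves the coefficients of $h$ up to order $K$ unchanged. Finally, passing from the truncated curve to an LPS profile (and back) is the content of GK's Claim 3.3 (Lemma \ref{lem polynomial}) and requires more care than ``grouping leading coefficients'': the Taylor coefficients of $\varphi_n$ at orders $\ge 1$ are signed vectors summing to zero, not probability distributions.
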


For two-player games, we can take $\ell = K = 1$. For the general case, GK only proved that these bounds exist. For our bounds, we need to define a few constants. Let $\kappa = \sum_{n \in \N}|S_n|$; $d = N-1$; $Z = (2d +6)(2d+5)^{\kappa -1}$; $D = 5(\kappa(2d+4) +2)Z$; $\eL = d{(6d - 3)}^{2\k - 1}$.

\begin{theorem}\label{thm main}
 In Theorem \ref{thm GK}, we can take $\ell = 2d(ZD)^2(1+Z)$ and $K = \eL2d(ZD)^2(1+Z)$.
\end{theorem}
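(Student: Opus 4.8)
The plan is to revisit GK's proof of Theorem~\ref{thm GK} and make quantitative the two steps where they only establish that a bound exists, feeding in the explicit \L{}ojasiewicz-exponent estimate of Kurdyka--Spodzieja \cite{KS2014} and the explicit curve-selection estimate of Basu--Roy \cite{BR2018}. Recall the structure of GK's argument: a profile $\s$ is perfect exactly when the point $(\s,\mathbf{0})$ lies in the closure of the semialgebraic set
\[
X=\{\,(\s',\e)\in\S\times\Re^{\kappa}:\ \e\gg\mathbf{0},\ \s'\ \text{is an }\e\text{-perfect equilibrium of }G\,\}\subseteq\Re^{2\kappa},
\]
which is cut out by $O(\kappa)$ polynomial (in)equalities: those built from the multilinear payoff differences $G_n(s_n,\cdot)-G_n(s'_n,\cdot)$ (of degree $d=N-1$ in the opponents' coordinates), the complementarity conditions of an $\e$-perfect equilibrium, and the inequalities $\e\gg\mathbf{0}$ and $\s'\ge\e$. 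After clearing denominators these have degree at most $2d+O(1)$ in $2\kappa$ variables, which is the source of the terms $2d+4$, $2d+5$, $2d+6$, $6d-3=3\cdot(2d)-3$, and of the exponents in $\kappa$ and $2\kappa$ in $Z$, $D$ and $\eL$. GK then (i) select an analytic curve into $\overline{X}$ through $(\s,\mathbf{0})$, (ii) read an LPS off the Puiseux expansion of its $\S$-coordinate, and (iii) use a \L{}ojasiewicz inequality to show that only finitely many levels of that expansion are needed for the best-reply conditions; the goal is to replace ``finitely'' by the explicit numbers stated.

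\emph{Step 1 (the curve and the bound on $\ell$).} Apply the Basu--Roy quantitative curve-selection lemma \cite{BR2018} to $X\subseteq\Re^{2\kappa}$ at $(\s,\mathbf{0})\in\overline{X}$. Substituting the variable count $2\kappa$ and the degree bound above into their estimates gives a semialgebraic path $t\mapsto(\s(t),\e(t))$ on $[0,\e_0)$ with $(\s(0),\e(0))=(\s,\mathbf{0})$ and $(\s(t),\e(t))\in X$ for all $t>0$, whose coordinates become ordinary power series after a ramification $t\mapsto t^{1/q}$, with $q\le Z$ and the resulting series of degree controlled by $D$. Following GK, one reads an LPS profile $\rho$ off the Puiseux expansion of the $\S$-coordinate, with $\rho_n^0=\s_n$ for each $n$; this gives property (2) and the full-support part of property (1). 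Since $\e(t)\gg\mathbf{0}$ for $t>0$ forces $\s(t)\gg\mathbf{0}$, every pure strategy of every player acquires positive weight at some level, and bounding --- from the Basu--Roy data --- the level by which this has happened for all coordinates gives $\ell(\rho)\le 2d(ZD)^2(1+Z)=\ell$, completing property (1).

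\emph{Step 2 (the bound on $K$).} For a player $n$ and a strategy $s_n$ set $h_{n,s_n}(t)=G_n(\s_n,\s_{-n}(t))-G_n(s_n,\s_{-n}(t))$, a one-variable analytic function. Since $(\s(t),\e(t))\in X$ certifies that $\s_n$ is an $\e(t)$-perfect best reply, $h_{n,s_n}(t)\ge0$ for small $t>0$, so either $h_{n,s_n}\equiv0$ or its lowest-order term is positive; the induced beliefs $\mu_n^i$ of Definition~\ref{def LPS beliefs} are arranged, as in GK, so that property (3) --- that $\s_n$ be a best reply of order $K$ --- holds exactly when $K$ is at least the order of vanishing at $0$ of every nonvanishing $h_{n,s_n}$. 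A naive ``degree of a composition'' estimate for this order is unavailable because the curve is only analytic, and this is where the \L{}ojasiewicz inequality enters: regarding $G_n(\s_n,\cdot)-G_n(s_n,\cdot)$ as a polynomial in $2\kappa$ variables of degree at most $2d$, Kurdyka--Spodzieja \cite{KS2014} bound it below, near its zero set, by the $\eL$-th power of the distance to that set, while the Puiseux data of Step~1 bound the rate at which this distance vanishes along the curve by a quantity at most $\ell$. Hence $\mathrm{ord}_0 h_{n,s_n}\le\eL\,\ell$, and the choice $K=\eL\,\ell=\eL\cdot 2d(ZD)^2(1+Z)$ makes property (3) hold. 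This proves the ``only if'' direction of Theorem~\ref{thm GK} with the stated $\ell$ and $K$; the ``if'' direction is GK's --- from any LPS with properties (1)--(3) they build a convergent sequence of completely-mixed profiles against which $\s_n$ is a best reply --- and it goes through unchanged for these (larger) values, since a best reply of order $K$ is a fortiori a best reply of the order GK require.

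\emph{The main obstacle.} The existence of the curve and of the \L{}ojasiewicz exponent are handed to us by \cite{BR2018} and \cite{KS2014}; the real work lies in two translations. First, one must fix the semialgebraic description of $X$ --- the number of defining polynomials, the number of variables, and the degrees --- precisely enough that the Basu--Roy and Kurdyka--Spodzieja formulas can be instantiated, since the exact constants $Z$, $D$ and $\eL$ depend on getting these counts right. Second, one must convert the resulting algebraic-complexity bounds on the curve into the two combinatorial quantities that actually occur in Theorem~\ref{thm GK}: the number of LPS levels needed for full support, and the order of vanishing along the curve of the best-reply differences. The second conversion is the delicate one --- it is precisely here that a polynomial bound fails and the \L{}ojasiewicz inequality is needed, and one must identify which polynomial, in how many variables and of what degree, the Kurdyka--Spodzieja estimate is applied to in order to arrive at $\eL=d(6d-3)^{2\kappa-1}$.
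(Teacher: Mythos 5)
There is a genuine gap, and it concerns which half of the equivalence actually needs the \L{}ojasiewicz inequality. You spend Kurdyka--Spodzieja in the ``perfect $\Rightarrow$ LPS'' direction, to bound $\mathrm{ord}_0\,h_{n,s_n}$ by $\eL\,\ell$; but no such bound is needed there. Once the selected curve $\varphi$ lies in the relevant semi-algebraic set, each difference $h_{n,s_n}(t)=G_n(\s_n,\varphi_{-n}(t))-G_n(s_n,\varphi_{-n}(t))$ is a convergent power series that is $\ge 0$ for small $t>0$, hence identically zero or with positive leading coefficient; truncating $\varphi$ after $K+1$ terms then satisfies condition 2(c) of Lemma \ref{lem polynomial} for \emph{every} $K$, since $\ge_L$ permits equality up to level $K$. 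Your claim that property (3) ``holds exactly when $K$ is at least the order of vanishing'' is false, and it is what leads you to misplace the \L{}ojasiewicz step. Conversely, the direction you dismiss as ``GK's, unchanged'' --- that an LPS satisfying (1)--(3) with the stated $\ell,K$ implies perfection --- is precisely where the quantitative content of $K=\eL\,\ell$ lives and where Kurdyka--Spodzieja must be used. Enlarging $\ell$ \emph{weakens} condition (1) (more LPSs satisfy $\ell(\rho)\le\ell$), so GK's existence result for their unspecified constants gives you nothing ``a fortiori''; one must show that $K\ge o(\eta)\eL$ suffices. The paper does this by forming the polynomial map $F$ on $\Re^{\k}\times\prod_n\Re^{T_n^0}$ whose coordinates are the payoff differences $G_n(t_n,\t_{-n})-G_n(\s_n,\t_{-n})$, with squared slack variables $c_{n,t_n}^2$ added for the order-$K$-inferior best replies: along $(\eta(t),c(t))$ this map has order $\ge K+1$, while if no zeros of $F$ off the face $W$ accumulated at $\s$, the \L{}ojasiewicz inequality with exponent $\eL$ would force its order to be at most $o(\eta)\eL\le K$, a contradiction; the resulting points in $V(F)\setminus W$ furnish the test sequence for perfection.

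A second problem is the instantiation of the quantitative results, which is exactly the ``translation'' you flag as the main obstacle but do not carry out correctly. The stated constants $Z=(2d+6)(2d+5)^{\k-1}$ and $D=5(\k(2d+4)+2)Z$ are Basu--Roy's bounds for a set in $\k$ variables cut out by polynomials of degree $d=N-1$: one fixes $\s$ as a parameter and applies Proposition \ref{selection} to $P=\{\t\in\mathrm{int}(\S):G_n(\s_n,\t_{-n})\ge G_n(s_n,\t_{-n})\ \forall n,s_n\}\subset\Re^{\k}$. Your set of pairs $(\s',\e)$ of $\e$-perfect equilibria lives in $\Re^{2\k}$ with degrees roughly $2d$, so the Basu--Roy formulas evaluated there give strictly larger constants and the theorem with the stated $Z,D$ would not follow. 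Likewise $\eL=d(6d-3)^{2\k-1}$ is Kurdyka--Spodzieja for degree $d$ (the map $F$ above, still of degree $d$ since $d\ge2$) in up to $2\k$ variables --- the doubling comes from the slack variables $c$, not from a degree $2d$; their formula for degree $\delta$ in $m$ variables is $\delta(6\delta-3)^{m-1}$, so your reading $6d-3=3\cdot(2d)-3$ does not match the source of the constant.
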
 

\begin{remark}
We have defined $K$ to be $\ell \eL$, but as the proof of Theorem \ref{thm main} shows, we do not have to let $K$ be this absolute constant, but rather we could let it be $\ell(\rho)\eL$. Put differently, in the statement of Theorem \ref{thm GK}, $\s$ is perfect iff we have an LPS satisfying conditions (1) and (2) and where $\s_n$ is a best-reply of order $\ell(\rho)\eL$ against $\rho$.
\end{remark}

\begin{remark}\label{rem crude bounds}
	Suppose we have an $N$-player game where each player has $a$ actions, then using very crude bounds for $Z$ and $D$, we get $\ell \le 100N^3a^2{(6aN)}^{6aN} \equiv \bar \ell$ and $K \le {(6N)}^{2aN} \bar \ell$. Of course, these bounds are enormously high, but we believe that they can be improved upon (see the last section of the paper).
\end{remark}

\begin{remark}\label{rem integer payoffs}
 As a practical matter, what can be said of our characterization when the payoffs are integers?  It is well-known that even with integer payoffs, a game may not have an equilibrium with rational coordinates---cf. \cite{N1951}. Therefore, the best one can hope for is that the LPS characterization we have involves algebraic numbers, and indeed that is the case.  To see why that is true, observe that an LPS test for perfection involves obtaining probability distributions (the levels of an LPS) that solve a finite system of polynomial equations and inequalities where the coefficients are the payoff numbers in the game. Thus, all the probability distributions of the LPS have only algebraic numbers.  
\end{remark}

\section{Some Facts About Polynomials}
GK prove their theorem by first deriving an equivalence between the statement using LPSs and another involving polynomials.  Applying tools from semi-algebraic geometry for polynomials, they then derive their bounds. The proof of our theorem works with the same set of polynomials and therefore we now review some facts about polynomials and also the above-mentioned equivalence of GK.

A monomial function $F: \Re^{\k} \to \Re$ is of the form $x_1^{i_1}\cdots x_k^{i_k}$ where $i_1, \ldots i_k$ are non-negative integers; its degree in variable $x_l$, denoted $\text{deg}_{x_l} (F)$, is $i_l$ and its total degree, denoted $\text{deg} (F)$, is $i_1 + \cdots + i_k$.  A polynomial is a finite linear combination $\sum_j a_j F_j$ of mononomials $F_j$; its degree in variable $x_l$ is $\max \, \{\, \text{deg}_{x_l} \, F_j \mid a_j  \neq 0 \, \}$ and its total degree is $\max \, \{\, \text{deg} \, F_j \mid a_j  \neq 0 \, \}$, with the degree of the zero function being zero. If  $F: \Re^\k \to \Re^l, l \ge 2$, is a function where each coordinate $F_j$ is a polynomial, the degree of $F$ is the maximum over $j$ of the degree of $F_j$. We call $F$ a polynomial map. 

For a polynomial (or more generally a power series) $f(t) \equiv \sum_k a_k t^k$ of a single variable $t$, the order of $f$, denoted $o(f)$, is the smallest $k$ for which $a_k \neq 0$---the order of the zero function is $\infty$. We say that $f > 0$ (resp. $f \ge 0)$ if $a_{o(f)} > 0$ (resp. either $a_{o(f)} > 0$ or $f \equiv 0$).  For a polynomial map $f: \Re \to \Re^l$, the order of $f$ is $\max_j o(f_j)$.

Now we turn to the description of perfect equilibria using polynomials.
A polynomial strategy-profile is a polynomial map $\eta: \Re \to \prod_{n \in \N} \Re^{S_n}$. For each $n$, the payoff function $G_n$ can be extended uniquely to a multilinear function over the whole of $\prod_{m \in \N} \Re^{S_m}$, still denoted $G_n$.  Given a polynomial strategy profile,  we can now compute the ``payoff'' $G_n(\eta) \equiv G_n \circ \eta$, which is a polynomial whose degree is at most $\sum_{m}\max_{s_m} \text{deg} \, \eta_{m,s_m}$.  We say that a strategy $\t_n \in \S_n$ is a best-reply of order $r$ against a polynomial strategy profile $\eta$ if for each $s_n \in S_n$, $G_n(\t_n, \eta_{-n}) - G_n(s_n, \eta_{-n})$ is either non-negative or of order at least  $r+1$. The following lemma is from GK (see their Claim 3.3).

\begin{lemma}\label{lem polynomial}
Let $\ell \le K$ be non-negative integers and let $\s \in \S$. The following statements are equivalent:
\begin{enumerate}
\item  There exists an LPS profile $\rho$ of order $K$ such that:
\begin{enumerate}
\item $\rho$ has full support and $\ell(\rho) \le \ell$;
\item $\rho_n^0 = \s_n$ for each player $n$;
\item $\s_n$ is a best-reply of order $K$ against $\rho$.
\end{enumerate}
\item There exists a polynomial map $\eta: \Re \to \Re^\k$ such that:

\begin{enumerate}
\item $\eta_{n,s_n} > 0$ for each $n, s_n$ and $o(\eta) \le \ell$;
\item $\eta(0) = \s$;
\item For each $n$, $\s_n$ is a best reply of order $K$ against $\eta$.
\end{enumerate}
\end{enumerate}
\end{lemma}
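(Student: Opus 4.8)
The proof plan is to establish the equivalence between the LPS formulation (statement 1) and the polynomial formulation (statement 2) in Lemma \ref{lem polynomial} by passing back and forth between a lexicographic probability system and a one-parameter polynomial curve. The bridge in both directions is the identification of the $k$-th level $\rho_n^k$ of an LPS with the coefficient of $t^k$ in a polynomial (or power series) $\eta_{n,s_n}(t)$, suitably normalized. The key elementary fact driving everything is that if $\eta_{m,s_m}(t) = \sum_k \rho_{m,s_m}^k t^k$ for each $m \neq n$, then the product $\prod_{m \neq n} \eta_{m,s_m}(t)$ has, as its coefficient of $t^k$, precisely $\sum_{(k_m): \sum k_m = k} \prod_{m \neq n} \rho_{m,s_m}^{k_m}$ — which is, up to the normalizing constant $C_n^k$, exactly the induced belief $\mu_n^k$ of Definition \ref{def LPS beliefs}. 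Hence $G_n(\t_n, \eta_{-n})(t)$, being multilinear in the $\eta_{m}$, has $t^k$-coefficient equal to a positive scalar multiple of $G_n(\t_n, \mu_n^k)$, so the lexicographic comparisons of Definition \ref{def LPS best-reply} translate termwise into order comparisons of the polynomial $G_n(\t_n,\eta_{-n}) - G_n(s_n,\eta_{-n})$.

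For the direction (1) $\Rightarrow$ (2): given an LPS profile $\rho$ of order $K$ with full support, $\ell(\rho) \le \ell$, and $\rho_n^0 = \s_n$, define $\eta_{n,s_n}(t) = \sum_{k=0}^{K} \rho_{n,s_n}^k t^k$. Then $\eta(0) = \s$ gives (2b); full support with $\ell(\rho)\le\ell$ gives $\eta_{n,s_n}(t) > 0$ in the sense of the order (its lowest nonzero coefficient is positive) and $o(\eta) \le \ell$, yielding (2a); and the coefficient computation above, together with the positivity of the normalizing constants $C_n^k$, shows that best-reply of order $K$ against $\rho$ (Definition \ref{def LPS best-reply}) is equivalent to best-reply of order $K$ against $\eta$, giving (2c). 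One must be slightly careful that $\eta_{n,s_n}$ need not be a probability for each fixed $t$, but only that its coefficients sum correctly; this is harmless since best replies are invariant under positive rescaling of each player's belief, and one can, if desired, renormalize by $\sum_{s_n}\eta_{n,s_n}(t)$ without changing orders.

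For the converse (2) $\Rightarrow$ (1): given a polynomial map $\eta$ with $\eta_{n,s_n} > 0$, $o(\eta)\le\ell$, $\eta(0)=\s$, and $\s_n$ a best reply of order $K$ against $\eta$, we extract the LPS by reading off coefficients up to degree $K$: set $\rho_{n,s_n}^k$ to be the (renormalized) coefficient of $t^k$ in $\eta_{n,s_n}$ for $k=0,\dots,K$, padding with arbitrary probabilities if some $\rho_n^k$ would otherwise be the zero vector, and renormalizing each level to a probability distribution. The condition $\eta_{n,s_n}>0$ with $\eta_{n,s_n}(0)=\s_{n,s_n}$ and $o(\eta)\le\ell$ ensures that the union of the supports of $\rho_n^0,\dots,\rho_n^\ell$ is all of $S_n$, so $\rho$ has full support with $\ell(\rho)\le\ell$; $\rho_n^0=\s_n$ is immediate; and the coefficient/order dictionary again converts best-reply of order $K$ against $\eta$ into best-reply of order $K$ against $\rho$. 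The main obstacle, and the only place requiring real care rather than bookkeeping, is handling the renormalizing constants consistently — verifying that a strategy is a best reply of order $k$ is unaffected both by the choice of $C_n^k$ in forming $\mu_n^k$ and by rescaling each $\eta_{m}$ by a positive power series, so that the two notions of ``best reply of order $K$'' genuinely coincide rather than merely being related by some order shift. Tracking this, together with the edge case of a level whose raw coefficient vector vanishes (which never happens at level $0$ because of $\eta(0)=\s\in\S$, and is harmless at higher levels because the relevant $\mu_n^k$ then simply receives no contribution from that index), completes the argument.
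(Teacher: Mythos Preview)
The paper does not actually prove this lemma; it merely cites Govindan and Klumpp \cite{GK2002} (their Claim 3.3) and then records in a remark that in GK's argument one may take $o(\eta)=\ell(\rho)$. So there is no proof in the present paper to compare your attempt against, and your direction $(1)\Rightarrow(2)$---defining $\eta_{n,s_n}(t)=\sum_{k=0}^{K}\rho_{n,s_n}^{k}t^{k}$ and using multilinearity to identify the $t^{k}$-coefficient of $G_n(\tau_n,\eta_{-n})-G_n(s_n,\eta_{-n})$ with $(C_n^{k})^{-1}\big(G_n(\tau_n,\mu_n^{k})-G_n(s_n,\mu_n^{k})\big)$---is exactly the intended dictionary and is correct.

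Your $(2)\Rightarrow(1)$ direction, however, has a genuine gap. The hypothesis $\eta_{n,s_n}>0$ only says that the \emph{leading} nonzero coefficient of $\eta_{n,s_n}$ is positive; higher coefficients may well be negative (e.g.\ $\eta_{n,s_n}(t)=\sigma_{n,s_n}-t$ when $\sigma_{n,s_n}>0$). Thus ``renormalizing each level to a probability distribution'' is not in general possible. Moreover, even when it is, per-level normalization destroys the product identity you rely on: if $\rho_m^{k}=a_m^{k}/c_m^{k}$ with level-dependent constants $c_m^{k}$, then $\sum_{(k_m)}\prod_{m\neq n}\rho_m^{k_m}$ is \emph{not} a scalar multiple of $\sum_{(k_m)}\prod_{m\neq n}a_m^{k_m}$, so the lexicographic comparison against $\mu_n$ no longer matches the order comparison of $G_n(\tau_n,\eta_{-n})-G_n(s_n,\eta_{-n})$. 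What one actually needs (and what GK do) is first to replace $\eta_m$ by a suitable modification with nonnegative level vectors---using that $\eta_m(t)$ lies in the positive orthant for small $t>0$---in a way that preserves both $o(\eta)$ and the best-reply conditions, and only then read off the LPS. Your sketch identifies the right dictionary but skips precisely this construction, which is the substantive step in this direction.
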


\begin{remark}
	As the proof of Lemma \ref{lem polynomial} in GK shows, we can take $o(\eta)$ to be equal to $\ell(\rho)$ in going from an LPS profile to the associated polynomial and vice versa, thus giving us an intimate connection between these two ways of looking at perfection. 
\end{remark}

\begin{remark} There are cases where simple bounds can be obtained for $\ell$ and $K$ using the equivalence of the lemma above. First note that $\s$ is a perfect equilibrium if and only if $\s$ belongs to the closure of  $P \equiv \{\,  \tau \in \Re^{\kappa} \mid \tau \in \text{int}(\S), \forall n \in \N, s_n \in S_n, G_n(\s_n, \tau_{-n}) - G_n (s_n, \tau_{-n}) \ge 0 \, \}$. If $P$ is a convex set---as it is the case with two-player or polymatrix games, for example---then consider $\tau^* \in P$. It follows that the linear map $t \mapsto (1-t)\s + t\tau^*$ satisfies conditions 2(a), 2(b) and 2(c) which implies that $\ell$ and $K$ can be taken equal to $1$.   \end{remark}

%Such bounds can also occur with more general payoffs, such as in games with more than two players. For example, consider the equation $h_{n,s_n}(\tau) = G_n(\s_n, \tau_{-n}) - G_n(s_n, \tau_{-n})$ where $s_n$ is in the support of $\s_n$ and $g_{n,t_n}(\tau) = G_n(\s_n, \tau_{-n}) - G_n(t_n, \tau_{-n})$, $t_n$ a pure strategy not in the support of $\s_n$. Then $P$ is the set of $\tau \in \text{int}(\S)$ where for each $n \in \N$, $ - h_{n,s_n}(\tau) \leq 0$ for each $s_n$ in the support of $\s_n$, and $ - g_{n,t_n}(\tau) \leq 0$ for $t_n$ not in the support of $\s_n$. Suppose now $\s$ is a point where the gradients $\nabla h_{n,s_n}(\s)$, $\nabla g_{n,t_n}(\s)$ -- where $t_n$'s are pure best-replies to $\s$ but not in $\s_n$'s support --  and the $\kappa$ vectors $- e_i = (0,...,-1,...,0)$ with $-1$ in the $i$-th position  satisfy the conditions of \textit{Mangasarian-Fromovitz} (henceforth MF) (see \cite{RW2009}) at $\s$: the gradients $\nabla h_{n,s_n}(\s)$ are linearly independent  and there exists a direction $d \in \Re^{\kappa}$ where $\nabla h_{n,s_n}(\s) \cdot d =0$, $-\nabla g_{n,t_n}(\s) \cdot d <0$ , and $ - e_i \cdot d <0$, for all $i$. It follows then for sufficiently small $t>0$, $g_{n,t_n}(\s + td) > 0$ and $h_{n,s_n}(\s + td)$ has order $2$ or more. Again, the linear map $t \mapsto \s + td$ satisfies conditions 2(a), 2(b) and 2(c)  and so $\ell$ and $K$ can be taken to be $1$. In this case, however, the bounds on $\ell$ and $K$ possibly depend on properties of $P$ around $\s$.

We conclude this section with the two key results from Real Algebraic Geometry that we referred to before.   Let $F: \Re^\k \to \Re^l$ be a polynomial map of degree $d$ and let $V(F)$ be the set of zeros of $F$. Fix $x \in V(F)$.   The \L{}ojasiewicz inequality provides a lower bound on the value of $F$ in a neighborhood of $x$. Specifically,  there exist positive constants $C, \e, r$ such that $\Vert F(y) \Vert  \ge C \text{dist}{(y, V(F))}^r$ for $y$ such that $\Vert x - y \Vert < \e$, where $\Vert \cdot \Vert$ is the Euclidean norm and $\text{dist}(y, V(F))$ is the Euclidean distance of $y$ to $V(F)$. The smallest $r$ satisfying the inequality is called the \L{}ojasiewicz exponent. Kurdyka and Spordieza \cite{KS2014} show that if $F$ is a polynomial of degree $d \ge 2$, then the \L{}ojasiewicz exponent is $\le d{(6d-3)}^{\k - 1}$. We exploit this estimate in our theorem. The next result concerns the curve-selection lemma.

\begin{proposition}\label{selection}
Let $P$ be a semi-algebraic subset of $\Re^{\kappa}$ defined by polynomials whose total degrees are bounded by $d$. Let $x$ belong to the closure of $P$. There exist $\e>0$ and an analytic function $\phi: [0,\e) \to \Re^{\kappa}$ such that: (1) $\phi((0,\e)) \subset P$; (2) $\phi(0) = x$; (3) $o(\phi) \le 2dZ^2D^2(1+Z)$.
\end{proposition}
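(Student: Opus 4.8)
The plan is to obtain Proposition~\ref{selection} as a direct, constant-tracking consequence of the quantitative curve-selection lemma of Basu and Roy \cite{BR2018}, converting their output into the analytic parametrization demanded here.

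First, put $P$ into the normal form to which \cite{BR2018} applies. Writing the defining formula of $P$ in disjunctive normal form and using that the closure of a finite union of sets is the union of their closures, we may assume $P$ is \emph{basic}, cut out by (weak) equalities and strict inequalities among polynomials of total degree $\le d$; replacing the equations $g_i=0$ by the single equation $\sum_i g_i^2 = 0$ of degree $\le 2d$ and keeping the strict inequalities, we land in the setting of one algebraic hypersurface of degree $O(d)$ together with finitely many sign conditions on $\Re^{\kappa}$, with $x$ still in the closure. The essential point is that this reduction does not increase the number of variables: that is what allows the eventual estimate to depend on $d$ and $\kappa$ alone.

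Second, invoke Basu and Roy. Their lemma produces a semi-algebraic curve $\gamma$ carrying a punctured interval into $P$ with $\gamma(0)=x$, given coordinatewise by real Puiseux series in $t$; the common denominator $q$ of the fractional exponents, the number of branches, and the degrees of the polynomials encoding $\gamma$ are bounded by explicit expressions in the Milnor--Thom-type count of critical points entering their roadmap construction --- which, for a degree-$O(d)$ problem on $\Re^{\kappa}$, is of the order of $Z=(2d+6)(2d+5)^{\kappa-1}$ --- and in the parametrization degree produced by the ensuing (sub)resultant computations, which is of the order of $D=5(\kappa(2d+4)+2)Z$. Substituting $t\mapsto t^q$ converts each Puiseux coordinate $\sum_m a_m t^{m/q}$ into the genuine power series $\sum_m a_m t^m$, so $\phi(t)\equiv\gamma(t^q)$ is analytic on some $[0,\e)$, has $\phi(0)=x$ and $\phi((0,\e))\subset P$, and satisfies $o(\phi)\le q\cdot o(\gamma)$.

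It then remains to do the arithmetic: feeding the bounds of \cite{BR2018} for $q$ and for the leading exponents of $\gamma$ --- each a polynomial in $Z$ and $D$ --- into $q\cdot o(\gamma)$ and verifying that the result is at most $2dZ^2D^2(1+Z)$. This bookkeeping is the main obstacle. Since $D$ is itself a fixed multiple of $Z$, the target bound carries $Z$ to the fifth power ($Z^2$ explicitly, $Z^2$ more from $D^2$, and $Z$ from the factor $1+Z$), together with $d$ linearly and $\kappa$ only through the coefficient of $D$; one must check that the Basu--Roy construction genuinely distributes its complexity in this pattern --- the $D^2$ arising from the squaring intrinsic to resultant elimination, the explicit $Z^2$ from the count of branches and from a second appeal to the critical-point bound, and the single extra factor $1+Z$ from clearing the exponent denominators $q$ --- with no intermediate step inflating the product. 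A secondary matter is to confirm that the first-step reduction indeed holds the effective degree to $O(d)$ and the dimension to $\kappa$, and that the estimate of \cite{BR2018} is insensitive to the number of retained sign conditions (or that these are absorbed at no cost), so that $Z$ and $D$ rather than larger constants govern the final bound.
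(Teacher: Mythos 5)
Your route is the same as the paper's in outline --- invoke the quantitative curve selection lemma of \cite{BR2018} and reparametrize $t\mapsto t^q$ to remove fractional exponents --- but the actual content of the proposition, namely turning the degree data of \cite{BR2018} into the order bound $2dZ^2D^2(1+Z)$, is precisely the step you defer as ``arithmetic''/``bookkeeping,'' and it is not a routine substitution, because Theorem 2 of \cite{BR2018} does not hand you Puiseux expansions of the coordinates of the curve, nor bounds on $q$ and on $o(\gamma)$. What it gives is a representation $\phi_i(t)=g_i(t,u(t))/g_0(t,u(t))$ in which $u$ is only implicitly defined as a root of a bivariate polynomial $f(T,U)$, together with the bounds $\deg_T\le 2dZD^2$ and $\deg_U\le Z$ for $f$ and the $g_i$. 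The missing work is: (i) showing that $u$ can be taken to be a convergent real Puiseux series with nonnegative exponents and exponent denominator $q\le Z$ (the paper gets this from the algebraic closure of the field of Puiseux series together with Riemann's resolution of singularities, Theorem 1.5 in \cite{K2007}); (ii) bounding the order $\xi$ of $u$ by $2dZD^2$ via a Newton-polygon cancellation argument applied to $f(t,u(t))=0$: two monomials $c_{ij}T^iU^j$, $c_{i'j'}T^{i'}U^{j'}$ must produce the same lowest power of $t$, so $\xi=(i-i')/(j'-j)\le 2dZD^2$; (iii) bounding the order of $g_i(t^q,u(t^q))$ by $\bigl(2dZD^2+Z\xi\bigr)q\le 2dZ^2D^2(1+Z)$ and using continuity of $\phi$ at $0$ to conclude that each quotient $g_i/g_0$ evaluated along $(t^q,u(t^q))$ is a power series obeying the same bound. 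None of this appears in your write-up, and your guessed accounting of the factors does not match the actual derivation: the extra explicit factor of $Z$ (yielding $Z^2$) comes from clearing the denominator $q\le Z$, while the factor $(1+Z)$ comes from adding the $T$-degree contribution $2dZD^2$ to the $U$-degree-times-$o(u)$ contribution $Z\cdot 2dZD^2$; it is not produced by branch counting or a ``second appeal to the critical-point bound.''

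A secondary problem is your preliminary reduction: passing to a basic set and replacing the equations by the single equation $\sum_i g_i^2=0$ doubles the degree to $2d$, whereas the constants $Z$ and $D$ in the statement are computed from $d$ itself; since Theorem 2 of \cite{BR2018} already applies to a semi-algebraic set described by a quantifier-free formula involving polynomials of degree at most $d$ in $\kappa$ variables, the paper needs no such reduction, and if you insist on it you must recompute the constants accordingly.
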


\begin{proof}The Quantitative Curve Selection Lemma (Theorem 2 in Basu and Roy \cite{BR2018}) implies that there exist: (1) a semi-algebraic path $\phi: [0, t_0) \to \Re^{\kappa}$; (2) a set of polynomials $f(T, U),g_0(T, U),...,g_{\kappa}(T, U)$ in two variables $(T, U)$; (3) a semi-algebraic function $u:[0, t_0) \to \Re$  such that: (a) $\phi(0) = x$ and $\phi(t) \in P$ for all $t > 0$; (b) $f(t, u(t)) = 0$ for all $t \in (0, t_0)$; (c) $\phi(t) = (\frac{g_1(t,u(t))}{g_0(t,u(t))},...,\frac{g_{\kappa}(t,u(t))}{g_{0}(t,u(t))})$, for $t>0$.\footnote{Since the \textit{description of $x$} (see Basu and Roy \cite{BR2018} for a definition) uses univariate polynomials with coefficients in $\Re$, $x$ admits a trivial description $h = (X, X-1, X-x_1,...,X-x_{\kappa})$. Therefore, the coefficients of the polynomials used in the \textit{description of the semi-algebraic path} $\phi$ can be taken directly as $\Re$.} Moreover, $$\max\{\text{deg}_{T}(f), deg_{T}(g_0),..., deg_{T}(g_{\kappa})\} = 2dZD^{2}$$ and  $$\max\{\text{deg}_{U}(f), deg_{U}(g_0),..., deg_{U}(g_{\kappa})\} = Z.$$ Viewing the polynomial $f$ as a polynomial with complex coefficients, it follows from the algebraic closure of complex Puiseux series that the root $u(\cdot)$ can be assumed to be a real Puiseux series. Moreover, by Riemman's method of resolution of singularities (see Theorem 1.5 in Koll\'ar \cite{K2007}) we have that $u(t)$ is a real Puiseux series with nonnegative exponents: $u(t) = \sum_{k \ge 0}a_k t^{k/q}$, where $q \le Z$. Riemann's result also implies that the Puiseux series converges in a neighborhood of zero. Let $\xi$ be the order of $u$. Now, a necessary condition for $(t, u(t))$ to be a root of $f(T,U) = \sum_{(i,j)} c_{ij}T^i U^j$ for $t \in [0,t_0)$ is that the lowest powers of $t$ after substituting $(t, u(t))$ for $(T,U)$ must cancel. Therefore, there must be at least two monomials $c_{ij} T^i U^j$ and $c_{i'j'}T^{i'} U^{j'}$ such that both give the same degree $\beta$ on $t$ after substituting $(t,u(t))$ in the monomials, and other monomials give degrees $\ge \beta$. Therefore, $i + j\xi = i' + j' \xi \iff \xi = \frac{i - i'}{j' - j} \le 2dZD^2$. 

Since $deg_T(g_{i})$ is bounded by $2dZD^2$ and $deg_{U}(g_{i})$ is bounded by $Z$, it follows that the order of $g_{i}(t,u(t))$ is bounded by $(2dZD^2 + 2d(ZD)^2)$. Changing variables from $t$ to $t^q$, it follows that the order of  $g_{i}(t^q,u(t^q))$ is less than $(2dZD^2 + 2d(ZD)^2)Z$. Since $\phi$ is continuous at $0$, it follows that $\frac{g_{i}(t^q, u(t^q))}{g_{0}(t^q, u(t^q))}, i = 1,...,\kappa$ is a power series with order bounded by $(2dZD^2 + 2d(ZD)^2)Z = (2d(ZD)^2)(1+Z)$.\end{proof}

\section{Proof of Theorem \ref{thm main}}
As we remarked in Section 2, for the case $N = 2$,  we can take $\ell = K = 1$. Therefore, assume $N > 2$.  We prove the theorem by invoking the equivalence in Lemma \ref{lem polynomial}.  Fix $\s \in \S$. Suppose $\ell$ and $K$ are as specified in Theorem \ref{thm main} and suppose $\eta: \Re \to \Re^k$ is a polynomial map satisfying  properties 2(a)-2(c) of Lemma \ref{lem polynomial}.  We show that $\s$ is perfect.  There is nothing to prove if $\s$ is completely mixed; therefore assume that it is not. For each $n$, let $T_n$ be the set of pure strategies that are best replies against $\s$. Let $T_n^1$ be the subset of $T_n$ consisting of those strategies that are best replies of order $K$ against $\eta$; then $T_n^1$ includes the support of $\s_n$. Let $T_n^0 = T_n \backslash T_n^1$. Let $F: \Re^\k \times \prod_n \Re^{T_n^0} \to \prod_{n} \Re^{T_n}$ be the polynomial whose coordinate for  $(n, t_n)$ is  $F_{n, t_n}(\t, c) \equiv G_n(t_n, \tau_{-n}) - G_n(\s_n, \tau_{-n})$ if $t_n \in T_n^1$ and $F_{n, t_n}(\t, c) \equiv G_n(t_n, \tau_{-n}) - G_n(\s_n, \tau_{-n}) + c_{n,t_n}^2$ if $t_n \in T_n^0$. (In case $T_n^0$ is empty for some $n$, then we do not have the coordinate $c$ for him; in particular if $T_n^0$ is empty for all $n$, $F$ is a function defined on $\Re^\k$.)
For all small $t$, each player $n$ and each strategy $t_n \in T_n^0$, $G_n(t_n, \eta_{-n}(t)) - G_n(\s_n, \eta_{-n}(t)) <0$. Therefore, the function $c_{n,t_n}(t) \equiv  {(G_n(\s_n, \eta_{-n}(t) - G_n(t_n, \eta_{-n}(t)))}^{1/2}$ is well-defined.  The map $F(\eta(t), c(t))$ now has order $K + 1$ as a function of $t$.

Let $W$ be the set of $(\t, c) \in \Re^{\kappa} \times \prod_n \Re^{T_n^0}$ such that $\t_{n,s_n} = 0$ for some $n$ and $s_n$ that is not in the support of $\s_n$.  We claim that $(V(F) \backslash W) \cap U$ is nonempty for each small neighborhood $U$ of $\s$. Indeed, by the \L ojasiewicz inequality, if this intersection is empty, then for each small $t$, since $d((\eta(t), c(t)), V(F)) = O(t^{o(\eta)})$, the order of $F(\eta(t), c(t))$ is no more than $o(\eta)\eL \leq K$, which is a contradiction with the conclusion from the previous paragraph.  Hence, the intersection is nonempty.   

Take now a sequence of neighborhoods $U_k$ of $\s$ whose intersection is $\s$.  For each $k$, pick a point $(\s^k, c^k) \in (V(F) \backslash W) \cap U_k$.  For each $n$, all the strategies in $T_n^1$ are equally good replies against $\s^k$, and at least weakly better than those in $T_n^0$; as strategies in $S_n \backslash T_n$ are inferior replies against $\s$, they remain so against $\s^k$ for large $k$. Therefore, $\s$ is a perfect equilibrium.

To prove the other direction, let now $\s$ be a perfect equilibrium of $G$.  Let $P$ be the set of completely mixed strategy-profiles $\t \in \S$ such that  for all $n \in \N, s_n \in S_n$, $G_n(\s_n, \tau_{-n}) - G_n (s_n, \tau_{-n}) \ge 0$. The set $P$ is semi-algebraic. Moreover, as $\s$ is perfect, $P$ is  non-empty and $\s$ belongs to the closure of $P$. By Proposition \ref{selection}, it follows that there exists an analytic function $\varphi: [0,\e) \to \Re^\kappa$ such that $\varphi((0,\e)) \subset P$, $\varphi(0) = \s$ and $o(\varphi) \le \ell$. Therefore, $K \le (2d(ZD)^2)(1+Z)\eL$. Now consider the polynomial $\eta$ of order $K$ obtained by truncating each coordinate of $\varphi$ to its first $K+1$ terms. This polynomial satisfies conditions 2(a)-2(c) of Lemma \ref{lem polynomial} and our theorem is proved.

\section{Concluding Remarks}
The two quantitative results concerning the \L{}ojasiewicz inequality and the curve-selection lemma that we invoke hold for the case of arbitrary polynomials.  In the context of game theory, the polynomials we are considering are very special: they happen to be multilinear functions.  Therefore, it is worthwhile investigating whether the bounds in this paper can be tightened. 

The idea of trembles in the definition of perfection is the basis for a number of refinements, and we can obtain a finite characterization for several of those as well.  As extensive-form perfection is the same as normal-form perfection applied to the agent-normal-form, the results here extend immediately to it.  As for properness \cite{M1978}, as GK show, there is an equivalent definition using LPSs that is similar to that for perfection with the added restriction that superior replies are infinitely more likely in the LPS.  By incorporating an additional variable $\e$ (used in the definition of $\e$-properness), we get a finite characterization where the variables $\k$ and $d$ are augmented by one.  Finally, a more challenging and important open problem is to obtain a similar characterization for stable sets \cite{KM1986}.\footnote{The definition of Kohlberg-Mertens stability invokes a minimality property and therefore these sets are not semi-algebraic.  But components of perfect equilibria that satisfy their robustness property are semi-algebraic, and it is these sets that we could hope to identify in a finite way.}

\end{document}